\newtheorem{theorem}{Theorem}
\newtheorem{corollary}{Corollary}
\newtheorem{definition}{Definition}
\begin{document}

\title{Privacy and Fairness in Recommender Systems via Adversarial Training of User Representations}

\author{\authorname{Yehezkel S. Resheff \sup{1}, Yanai Elazar \sup{2,3}, Moni Shahar \sup{1}, and Oren Sar Shalom \sup{3}}
\affiliation{\sup{1}Intuit Tech Futures, Israel}
\affiliation{\sup{2}Bar Ilan University, Israel}
\affiliation{\sup{3}Intuit, Israel}
\email{\{hezi.resheff, yanaiela, monishahar, oren.sarshalom\}@gmail.com}
}

\keywords{Privacy, Representations, Information Leakage}

\abstract{Latent factor models for recommender systems represent users and items as low dimensional vectors. Privacy risks of such systems have previously been studied mostly in the context of recovery of personal information in the form of usage records from the training data. However, the user representations themselves may be used together with external data to recover private user information such as gender and age. In this paper we show that user vectors calculated by a common recommender system can be exploited in this way. We propose the privacy-adversarial framework to eliminate such leakage of private information, and study the trade-off between recommender performance and leakage both theoretically and empirically using a benchmark dataset. An advantage of the proposed method is that it also helps guarantee fairness of results, since all implicit knowledge of a set of attributes is scrubbed from the representations used by the model, and thus can't enter into the decision making. We discuss further applications of this method towards the generation of deeper and more insightful recommendations. 
}

%https://www.overleaf.com/project/5bc9994f655a8f4be2ffe102

\onecolumn \maketitle \normalsize \vfill

\section{\uppercase{Introduction}}
With the increasing popularity of digital content consumption, recommender systems have become a major influencer of online behavior, from what news we read and what movies we watch, to what products we buy. Recommender systems have revolutionized the way items are chosen across multiple domains, moving users from the previous active search scenario to the more passive selection of presented content as we know it today. 

A recommender system needs to fulfill two objectives in order to be able to supply relevant recommendations that will be helpful to users. Namely, it has to accurately model both the users and the items. An implication of the first objective is that recommenders aim at revealing users' preferences, their desires and wills, and even learn \emph{when} to suggest different items \cite{adomavicius2015context}, and how many times to recommend a given item. In order to be effective, a system must have an extensive view of a user, embodied as a representation that includes a substantial amount of private information (which may be either implicitly or explicitly encoded).

Indeed, in recent years we have seen a plethora of advanced methods applied to improve the personalized recommendation problem. Furthermore, modern Collaborative Filtering approaches are becoming increasingly more complex not only in algorithmic terms but also in their ability to process and use additional data. Arguably, demographic data is the most valuable source of information for user modeling. As such, it was used from the early days of recommendation systems \cite{pazzani1999framework}. New deep learning based state of the art methods also utilize demographic information \cite{covington2016deep,zhao2014we} in order to generate better and more relevant predictions. 

While this in-depth modeling of users holds great value, it may also pose severe privacy threats. One major privacy concern, especially when private information is explicitly used during training, is recovery of records from the training data by an attacker. This aspect of privacy has previously been studied in the context of recommender systems using the framework of differential privacy (see Section \ref{sec:related} below). A related but different threat, which to the best of our knowledge has not yet been addressed in this context, is the recovery of private information that was \textit{not} explicitly present in the training data. In the \textit{implicit private information} attack, either the recommendations or the user representations within a recommender system are used together with an external information to uncover private information about the users. 

For example, an attacker with access to the gender of some users could use their representations and a supervised learning approach to infer the gender of all users. In an extreme case, it would suffice for an attacker to glance at a random individual's computer screen, see what content is being suggested to them by the recommendation system, and immediately infer many demographic, financial, and other information pertaining to the individual.

In this paper we introduce the threat of \textit{implicit private information} leakage in recommender systems. We show the existence of the leakage both theoretically, and experimentally using a standard recommender and benchmark dataset. 
Finally, we propose the privacy-adversarial method of constructing a recommender from which the target private information cannot be read out, using the adversarial training method \cite{ganin2016domain}. We show the ability of this method to conceal private information and the trade-off between recommender performance and private information leakage. 

\noindent The contributions of this paper are as follows:

\begin{itemize}
  \item We describe and formalize the threat of implicit information leakage in machine learning models in general, and recommenders in particular. 
  \item We suggest the privacy-adversarial method (an adaptation of adversarial training to our setting) to eliminate the leakage, and validate the method in an extensive set of experiments. 
  \item Finally, we discuss the tangential issue of fairness in machine learning, and suggest that the privacy-adversarial method we use in this paper for the sake of privacy is capable of fixing some fairness problems previously discussed in the context of general machine learning models. 
\end{itemize}

\section{\uppercase{Related Work}}
\label{sec:related}
Since the release of the Netflix Prize dataset \cite{bennett2007netflix}, a large body of work proved that raw historical usage of users might reveal private information about them (e.g., \cite{weinsberg2012blurme,narayanan2008robust}). That is, given historical usage of an anonymized user, one can infer the user's demographics or even their identity. However, we argue that even the users' representations may reveal private information, without access to the training data (and in fact, the same is true about the recommendations presented to the user -- i.e. the actual recommendation, which can be intercepted or seen by a third party, includes private information that we may wish to conceal). 

Privacy has also been studied recently in the context of recommender systems \cite{berlioz2015applying,nikolaenko2013privacy,mcsherry2009differentially,friedman2016differential,liu2015fast,shen2014privacy}. This growing body of work has been concerned for the most part with differentially private recommender systems, and always with the aim of guaranteeing that the actual records used to train the system are not recoverable. Unlike these works, we are concerned with the leakage of private information (such as demographics: age, gender, etc.) that was not directly present during training, but was implicitly learned by the system in the process of generating a useful user representation. These two ideas are not mutually exclusive, and may be combined to achieve a better privacy preserving recommender.

The problem of implicit private information studied here is closely related to the one studied in \cite{zemel2013learning}. In their work, they look for representations that achieve both {\it group fairness} and {\it individual fairness} in classification tasks. Individual fairness means that two persons having a similar representation should be treated similarly. Group fairness means that given a group we wish to protect (some proper subset $S \subset P$ of the population $P$), the proportion of people positively classified in $S$ equals that in the entire population. They achieved this goal by solving an optimization problem to learn an intermediate representation that obfuscates the membership in the protected group. In both cases the aim is to achieve good results on the respective predictive tasks, while using a representation that is agnostic to some aspects of the implicit structure of the domain. 

Several works apply adversarial training for the purpose of creating a representation free of a specific attribute \cite{beutel2017data,xie2017controllable,zhang2018mitigating,elazar:2018} (in fact, the original purpose of the method of adversarial training can also be seen as such). To the best of our knowledge, the current paper represents the first application of these ideas in the domain of recommender systems. Furthermore, while all the aforementioned applications experiment with solely one feature at a time, in this work we aim to create a representation free of multiple demographic features.

\section{\uppercase{Privacy and User Representations}}
User representations in recommendation systems are designed to encode the relevant information to determine user-item affinity. As such, to the extent that item affinity is dependent on certain user characteristics (such as demographics), the optimal user representations must include this information as well in order to have the necessary predictive power with respect to recommendations. We formalize this intuition using an information theoretic approach:

\begin{theorem}
\label{thm:1}
Let $\hat{v} = f(p_u, q_i)$ be an estimator of an outcome variable v associated with a pair $(p_u, q_i)$ of user and item representations (we use $p$ to denote users, and $q$ to denote items throughout the paper), and d any variable associated with users (such as age, gender, or marital status to name a few):  
\[ I(v;\hat{v}) > H[v|d] \implies I(\hat{v};d) > 0 \] 
\end{theorem}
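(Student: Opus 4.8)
The plan is to prove the contrapositive. Since mutual information is always non-negative, the conclusion $I(\hat{v};d) > 0$ can only fail when $I(\hat{v};d) = 0$, i.e. when $\hat{v}$ and $d$ are independent. So I would assume $I(\hat{v};d) = 0$ and aim to derive $I(v;\hat{v}) \le H[v|d]$, which is precisely the negation of the hypothesis. Establishing this implication completes the proof by contraposition.

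The heart of the argument is a short chain of standard information-theoretic (in)equalities whose point is to slip $d$ in as a conditioning variable. First I would use monotonicity of mutual information in the form $I(v;\hat{v}) \le I(v,d;\hat{v})$, which follows from the chain-rule expansion $I(v,d;\hat{v}) = I(v;\hat{v}) + I(d;\hat{v}\mid v)$ together with non-negativity of the conditional term. Next I would expand the same joint quantity the other way, $I(v,d;\hat{v}) = I(d;\hat{v}) + I(v;\hat{v}\mid d)$, and invoke the assumption $I(d;\hat{v}) = 0$ to discard the first summand. Finally I would bound the conditional mutual information by the conditional entropy via $I(v;\hat{v}\mid d) = H[v\mid d] - H[v\mid \hat{v},d] \le H[v\mid d]$, using $H[v\mid \hat{v},d] \ge 0$. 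Stitching these together gives $I(v;\hat{v}) \le I(v,d;\hat{v}) = I(v;\hat{v}\mid d) \le H[v|d]$, exactly the bound required.

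I do not expect a genuine obstacle, since every step is a textbook identity (the chain rule, monotonicity, and the $I \le H$ bound); the content of the theorem is really in choosing to condition on $d$ rather than in any hard estimate. The one point that deserves explicit care is the final inequality $I(v;\hat{v}\mid d) \le H[v\mid d]$, which rests on non-negativity of conditional entropy. This is automatic in the discrete setting, which is exactly the regime relevant to the motivating attributes (gender, age bracket, marital status), so I would state the discreteness of $v$ (and hence $\hat{v}$) up front to guarantee that all entropies are non-negative and the manipulations are valid.
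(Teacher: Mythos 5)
Your proof is correct and follows essentially the same route as the paper's: both reduce to the case $I(\hat{v};d)=0$ and use the chain rule to expand the joint mutual information $I(v;d,\hat{v})$ two ways so that $d$ enters as a conditioning variable, combined with the bound of mutual information by entropy. If anything, your contrapositive write-up is slightly tighter at one point: where the paper asserts the equality $I(v;\hat{v}) = I(v;\hat{v}\mid d)$ under the assumption $I(\hat{v};d)=0$ (in general one only gets $I(v;\hat{v}) \le I(v;\hat{v}\mid d)$, which still suffices for the contradiction), you correctly invoke only the inequality via monotonicity.
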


\begin{proof}

\[ H[v] = H[v|d] + I(v;d) = H[v|\hat{v}] + I(v;\hat{v}) \]

\noindent rearranging and using $I(v;\hat{v}) > H[v|d]$ we have:

\[ I(v;d) > H[v|\hat{v}] \] 

\noindent and therefore:

\[ I(v;d) + I(v;\hat{v}) > H[v|\hat{v}] + H[v] - H[v|\hat{v}] = H[v]\]

\[ \implies I(\hat{v};d) > 0\]

\noindent To see how the final step follows, suppose on the contrary:

\[ I(\hat{v};d) = 0 \]

\noindent then from the above and by the chain rule for information we have:  

\[ H[v] < I(v;d) + I(v;\hat{v}) = I(v;d) + I(v;\hat{v}|d) = I(v;d, \hat{v}) \]

\noindent in contradiction to the $H[x] \geq I(x;\cdot)$ relation of entropy and mutual information.

\end{proof}

Theorem \ref{thm:1} asserts that the \textit{predictions} $\hat{v}$ must contain information about any relevant variable associated with users, if they are better than a certain threshold determined by the relevance of the variable. For example, if \textit{age} is a strong determinant of the movies a user is likely to want to watch, then by looking at the recommendations for a user we will be a able to extract some information about said user's age. Next, we show that the same is true about the \textit{user representations} used by the recommender, once the item representations are fixed (i.e. after learning): 

\begin{theorem}
\label{thm:2}
Let $\hat{v} = f(p_u, q_i) = g(p_u)$ be an estimator of outcome variable v, and d a variable associated with users, then: $I(p_u;d) \geq I(\hat{v};d) $
\end{theorem}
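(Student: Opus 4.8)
The plan is to recognize the statement as a special case of the data processing inequality, where the downstream quantity is a \emph{deterministic} function of an intermediate one. The structural fact we are handed by the hypothesis $\hat{v} = f(p_u, q_i) = g(p_u)$ is that, once the item representations are fixed (i.e.\ after learning), the prediction depends on the user only through $p_u$. Consequently $d$, $p_u$, and $\hat{v}$ form a Markov chain $d \to p_u \to \hat{v}$, and the claim $I(p_u;d) \geq I(\hat{v};d)$ is exactly what the data processing inequality delivers along this chain.

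Rather than invoke that inequality as a black box, I would reproduce the chain-rule bookkeeping already used in the proof of Theorem~\ref{thm:1}, so the argument stays self-contained. First I would expand the joint mutual information $I(d; p_u, \hat{v})$ in the two possible orders:
\[ I(d; p_u, \hat{v}) = I(d; p_u) + I(d; \hat{v} \mid p_u) = I(d; \hat{v}) + I(d; p_u \mid \hat{v}). \]
This equality is pure chain rule and requires no assumptions on the variables.

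The content of the proof lives in the single step of eliminating the cross term $I(d; \hat{v} \mid p_u)$. Because $\hat{v} = g(p_u)$ is a deterministic function of $p_u$, we have $H[\hat{v} \mid p_u] = 0$, and since $0 \leq I(d; \hat{v} \mid p_u) \leq H[\hat{v} \mid p_u]$, the conditional mutual information vanishes. Substituting this into the equality above gives $I(d; p_u) = I(d; \hat{v}) + I(d; p_u \mid \hat{v})$, and dropping the nonnegative remainder $I(d; p_u \mid \hat{v}) \geq 0$ yields $I(p_u;d) \geq I(\hat{v};d)$.

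I expect the only genuinely nontrivial point to be justifying that $I(d; \hat{v} \mid p_u) = 0$; everything else is the chain rule together with nonnegativity of (conditional) mutual information. In writing this up I would be careful to state the determinism hypothesis precisely, namely that $g$ is a fixed map applied to every user's representation (with the item side held fixed post-training), so that conditioning on $p_u$ really does pin down $\hat{v}$ and force $H[\hat{v}\mid p_u]=0$.
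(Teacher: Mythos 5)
Your proof is correct and rests on exactly the same observation as the paper's: since $\hat{v}=g(p_u)$ depends on the user only through $p_u$, the variables form a Markov chain $d \to p_u \to \hat{v}$ and the data processing inequality gives $I(p_u;d) \geq I(\hat{v};d)$, which is the paper's entire (one-line) proof. The only difference is that you additionally unpack the inequality via the chain rule and the vanishing of $I(d;\hat{v}\mid p_u)$, making the argument self-contained where the paper simply cites the inequality.
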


\begin{proof}
$\hat{v}$ is a function of $p_u$ alone and so by the data processing inequality we have $\forall d: I(p_u;d) \geq I(\hat{v};d)$. 
\end{proof}

\begin{corollary}
As a result of Theorems \ref{thm:1} - \ref{thm:2}, for any meaningful recommendation system and user characteristic there will be information leakage between the user representation and the characteristic. Specifically, for any system $\hat{v} = f(p_u, q_i)$, and user characteristic $d$ we have that:
\[ I(p_u;d) \geq I(\hat{v};d) > 0 \]
\end{corollary}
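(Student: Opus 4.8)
The plan is to assemble the two-sided chain directly from the two results already in hand, handling the left inequality and the right inequality separately. For the left inequality $I(p_u;d) \geq I(\hat{v};d)$, I would invoke Theorem~\ref{thm:2} essentially verbatim. The only preliminary is to justify that in the trained system $\hat{v} = f(p_u, q_i)$ we may regard $\hat{v}$ as a function of $p_u$ alone: once the item representations $q_i$ are fixed, which is precisely the post-learning regime, for any fixed item the map $p_u \mapsto f(p_u, q_i)$ is a deterministic function $g$, so $\hat{v} = g(p_u)$ and the data-processing inequality underlying Theorem~\ref{thm:2} applies, giving $I(p_u;d) \geq I(\hat{v};d)$ for every user characteristic $d$. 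This half is unconditional.

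For the right inequality $I(\hat{v};d) > 0$, I would appeal to Theorem~\ref{thm:1}, whose conclusion is exactly this strict positivity. The content of the step is therefore to verify the hypothesis of Theorem~\ref{thm:1}, namely the threshold condition $I(v;\hat{v}) > H[v|d]$. This is where the qualifiers \emph{meaningful recommendation system} and \emph{relevant user characteristic} do their work, and where I expect the main obstacle to lie, since these are informal descriptors that must be tied to the precise quantity $I(v;\hat{v}) > H[v|d]$ rather than to any entropy manipulation.

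The way I would discharge this is to read each qualifier as a bound on one side of the inequality. A meaningful recommender is one whose predictions $\hat{v}$ genuinely track the outcome $v$, i.e. $I(v;\hat{v})$ is large; a useless predictor with $I(v;\hat{v}) = 0$ clearly carries no such guarantee, which is why the hypothesis is needed. A relevant characteristic $d$ is one that is itself predictive of $v$, so that conditioning on $d$ substantially lowers the residual uncertainty in $v$, i.e. $H[v|d]$ is small. Pairing a large $I(v;\hat{v})$ with a small $H[v|d]$ is exactly what forces $I(v;\hat{v}) > H[v|d]$, at which point Theorem~\ref{thm:1} yields $I(\hat{v};d) > 0$. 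Chaining this with the left inequality from the first paragraph gives the claimed $I(p_u;d) \geq I(\hat{v};d) > 0$. I would not anticipate any further computation: the substance is entirely in articulating the meaningful and relevant conditions so that the Theorem~\ref{thm:1} threshold is met, after which both theorems simply concatenate.
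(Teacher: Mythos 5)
Your proposal is correct and follows exactly the route the paper intends: the corollary is stated without a separate proof precisely because it is the concatenation of Theorem~\ref{thm:2} (data processing, giving $I(p_u;d) \geq I(\hat{v};d)$ once the item vectors are fixed) with Theorem~\ref{thm:1} (giving $I(\hat{v};d) > 0$ under the threshold hypothesis $I(v;\hat{v}) > H[v|d]$). Your observation that the informal words \emph{meaningful} and \emph{relevant} are doing the work of guaranteeing that threshold hypothesis is exactly the right reading of the paper's (implicit) argument.
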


This assertion, that we cannot have both perfect privacy and performance in our setting, naturally leads to the question of the trade-off between the extent of information leakage, and the performance of the recommendation system. While the precise point selected on this trade-off curve is likely to be determined by the use-case, it would be reasonable to assume that in any case we will not want to sacrifice privacy unless we gain in performance. This can be understood as a Pareto optimality requirement on the multi-objective defined by the system and privacy objectives:

\begin{definition} 
The \textbf{privacy acceptable} subset of a family $\mathbb{S}$ of recommendation systems of the form: 
$f:[n_{users}] \times [n_{items}] \rightarrow \mathbb{R}_{+}$ with respect to a recommendation loss $l$ and a privacy target $h$ is the Pareto front in $\mathbb{S}$ of the multi-objective $(l, h)$.  
\end{definition}

While the method described in the rest of this paper does not directly address the issue of selecting a privacy-acceptable system, we show that our method is able to dramatically reduce information leakage while maintaining the majority of system performance. Future work will focus on methods and analytical tools in the spirit of Theorem \ref{thm:1} to assert that for a given system with a certain performance, there does not exist a system (in the family under consideration) with at least equal performance and better privacy. 

\subsection{Privacy-Adversarial Recommendation Systems}
In the previous section we showed that for any recommendation system where the user representations capture enough to make good predictions, these representations must reveal information about any pertinent user characteristic. In this section we describe a method to reduce such information from the user representations, in a way that allows to select along the trade-off curve of performance and information leakage. 

The method we use borrows the key idea from domain-adversarial training \cite{ganin2016domain}, where the aim is to learn a representation that is agnostic to the domain from which the example is drawn. Adapted to the problem at hand, this method enables us to construct a user representation from which the private information can not be read out. 

We start with an arbitrary latent factor recommendation system (which we will assume is trained using a gradient method). An additional construct is then appended from the user vectors, the output of which is the private field(s) we wish to censor. During training we follow two goals: (a) we would like to change the recommender parameters to optimize the original system objective, and (b) we would like to change the user vectors only, in order to \textbf{harm} the readout of the private information, while optimizing the readout parameters themselves with respect to the private information readout target. This is achieved by application of the gradient reversal trick introduced in \cite{ganin2016domain}, leading to the following update rule for user representation $p_u$:

\begin{equation}
p_u \leftarrow p_u -\alpha \Big[ \frac{\partial loss_{recsys}}{\partial p_u}  - \lambda \sum_i \frac{\partial loss_{demographics_i}}{\partial p_u} \Big]  
\end{equation}

\noindent where $\alpha$ and $\lambda$ are the general learning rate, and the adversarial training learning rate respectively. $loss_{recsys}$ is the recommendation system loss, and $loss_{demographics_i}$ is the loss for the $i-th$ demographic field prediction task. 

Two special cases are noteworthy. First, for $\lambda=0$ this formulation reduces back to the regular recommendation system. Second, setting $\lambda < 0$ we get the multi-task setting where we are trying to achieve both the recommendation and the demographic prediction task simultaneously. 

The gradient descent update for the rest of the recommendation system parameters (namely, the item representations and biases) is done by the regular $-\alpha\frac{\partial loss_{recsys}}{\partial \theta}$ update rule. Likewise, the parameters for the demographic field readouts are optimized in the same way with respect to their respective classification objectives. 

\section{\uppercase{A note about fairness}}
The issue of fairness of models is a long standing debate in the scientific community and beyond. Some cases aspects of fairness are mandated by legislation or social norms. For instance, when modeling risk for the purpose of loans, in most countries the use of certain characteristics (such as gender or race) would be strictly prohibited. However, while these protected variables are not explicitly entered into the model, how can we be sure that they did not affect the outcome via correlations with variables which were indeed included? 

Many methods have been devised to address this question and guarantee fairness (see Section \ref{sec:related} for a brief summary of some of these lines of work). The ultimate solution to the problem of fairness is to be able to  guarantee that a set of variables did not have an affect on the outcome of a model. In order to be able to give such a guarantee we should be able to assert that the \textbf{information content} of the forbidden variables was not present in the model. 

It is easy to see why just excluding protected variables from the model is not enough. Suppose for instance we wish to exclude \textit{gender} from our model in order to have equality in the outcome of the model with respect to this attribute. Namely, we want the distributions of outcomes to be the same for all genders. The first step is to exclude this variable from the model. However, including other seemingly benign attributes such as \textit{occupation} will allow \textit{gender} to be implicitly included in the model (since presumably these variables are dependent, i.e. share information that will allow the model to 'guess' the gender anyway). 

The method of privacy-adversarial training of recommender systems presented in this paper is a step towards fairness in machine learning models. By applying privacy-adversarial training with respect to a set of protected attributes, we ensure that the representation of the individual does not include these attributes, and they therefore can't affect the outcome. However, we note that the method does not come with a provable guarantee that this information was indeed scrubbed from the model. Furthermore, in a recent study \cite{elazar:2018} it has been shown that a similar technique is able to eliminate demographic classification during training in an NLP task, but the text representations produced by the model still contain demographic information that can be extracted via different classifiers. In the next section we provide evidence that at least in the context of recommendation systems the method of privacy-adversarial training works well, and that the resulting representations do not contain information about gender and age that can be read out even by additional classifier (Table \ref{tab:all-clf}). That being said, it is important to remember that for critical issues such as privacy and fairness we would ideally want provability of the properties of the method.

\section{\uppercase{Results}}
\paragraph{Data} The experiments in this section were conducted on the MovieLens 1M dataset \cite{harper2016movielens}. This extensively studied dataset (see for example \cite{miller2003movielens,chen2010social,jung2012attribute,peralta2007extraction}) includes 1,000,209 ratings from 6,040 users, on 3,706 movies. In addition, demographic information in the form of age and gender is provided for each user. Gender (male/female) is skewed towards male with 71.7\% in the male category. Age is divided into $7$ groups (0-18, 18-25, 25-35, 35-45, 45-50, 50-56, 56-inf) with 34.7\% in the most popular age group, being 25-35.
This means that when absolutely no personal data is given about an arbitrary user, the prediction accuracy of gender and age group cannot exceed 71.7\% and 34.7\% respectively.
% \YE{cannot is too harsh? maybe something statistical? like, it should be around 71.7}
% for generalization it is exactly these numbers (attainable by the trivial classifier, and nothing larger can be obtained by any classifier)

\paragraph{Recommendation System} We use the Bayesian Personalized Ranking (BPR) recommendation system \cite{rendle2009bpr}, a natural choice for ranking tasks. The model is modified with adversarial demographic prediction by appending a linear readout structure from the user-vectors to each of the demographic targets (binary gender and 7-category age). The \textit{gradient reversal layer (GRL)} \cite{ganin2016domain}  is applied between the user-vectors and each of the demographic readout structures, so that effectively during training the user vectors are optimized with respect to the recommendation task, but de-optimized with respect to the gender prediction. At the same time, the demographic readout is optimized to use the current user representation to predict gender and age. The result of this scheme is a user representation that is good for recommendation but not good for demographic prediction (i.e. is purged of the information we do not want it to contain). We note that the same method could be applied to any type of recommendation system which includes user representations and is trained using a gradient based method.

\paragraph{Evaluation} Recommendation systems were evaluated using a hold out set. For each user in the MovieLens 1M Dataset, the final movie that they watched was set aside for testing, and never seen during training. The fraction of users for whom this held out movie was in the top-k recommendation \cite{koren2008factorization} is reported as model accuracy (we use $k=10$). 
Private information in the user representations was evaluated using both a neural-net predictor of the same form used during adversarial training, and a host of standard classifiers (SVMs with various parameters and kernels, Decision Trees, Random Forest -- see Table \ref{tab:all-clf}). The rest of the results are shown for the original neural classifier with a cross validation procedure. Results are reported as accuracy. 

\begin{table}
  \caption{Verification of the inability to predict demographic fields from user representations trained in the privacy-adversarial method. Results in this table are given for representations of size $10$ with $\lambda=1.$}
  \label{tab:all-clf}
  \begin{tabular}{lcc}
    \toprule
    classifier & gender & age \\
    \midrule
    large class baseline & 71.70 & 34.70 \\
    \midrule
    softmax neural net readout & 71.70 & 34.47 \\ 
    SVM (linear; C=.1)   & 71.71 & 34.62 \\ 
    SVM (linear; C=1)    & 71.71 & 34.59 \\ 
    SVM (linear; C=10)   & 71.71 & 34.59 \\ 
    SVM (RBF kernel)     & 71.71 & 29.20 \\ 
    Decision Tree        & 64.00 & 25.44 \\ 
    Random Forrest       & 69.80 & 29.20 \\ 
    Gradient Boosting    & 71.79 & 33.48 \\
  \bottomrule
\end{tabular}
\end{table}

\begin{table}
  \caption{Gender prediction from user representations. First column corresponds to the regular recommendation system, and the following columns to privacy-adversarial training with the prescribed value of $\lambda$. Rows correspond to the size of user and item representations. Final row contains the na\"ive baseline reverting to the predicting the largest class.}
  \label{tab:gender}
  \begin{tabular}{c|c|cccc}
    \toprule
    size / $\lambda$ & 0 & .01 & .1 & 1 & 10 \\
    \midrule
    10 & 76.97 & 74.21 & 71.33  & \textbf{71.70} & 71.60 \\
    20 & 77.55 & 74.26 & 71.50  & 72.30 & 71.03 \\
    50 & 77.80 & 74.34 & 72.24  & 86.00 & 74.26 \\ 
    \midrule
    na\"ive &  \multicolumn{2}{c}{$\cdots$} & 71.70\% & \multicolumn{2}{c}{$\cdots$} \\ 
  \bottomrule
\end{tabular}
\end{table}

\paragraph{Results-privacy} Private demographic information does indeed exist in user representations in the standard recommendation system. Gender prediction (Table \ref{tab:gender}, $\lambda=0$ column) increases with size of user representation to 77.8\% (recall 71.7\% are Male). Likewise, age bracket prediction also increases with size of user representation and reaches 44.90\% (largest category is 34.7\%). These results serve as the baseline against which the adversarial training models are tested against. Our aim in the privacy-adversarial setting will be to reduce the classification results down to the baseline, reflecting user representations were purged of this private information.

\begin{table}[t]
  \caption{Age prediction from user representations. First column corresponds to the regular recommendation system, and the following columns to privacy-adversarial training with the prescribed value of $\lambda$. Rows correspond to the size of user and item representations. Final row contains the na\"ive baseline reverting to the predicting the largest class.}
  \label{tab:age}
  \begin{tabular}{c|c|cccc}
    \toprule
    size / $\lambda$ & 0 & .01 & .1 & 1 & 10 \\
    \midrule
    10 & 41.29 & 36.28 & 34.29 & 34.47 & 34.31 \\
    20 & 44.16 & 36.34 & 33.97 & 35.70 & 38.01 \\
    50 & 44.90 & 36.46 & \textbf{34.62} & 70.27 & 52.7 \\
	\midrule 
    na\"ive &  \multicolumn{2}{c}{$\cdots$} & 34.70\% & \multicolumn{2}{c}{$\cdots$} \\ 

  \bottomrule
\end{tabular}
\end{table}

\paragraph{Results-privacy-adversarial} In the privacy-adversarial setting, overall prediction results for both gender and age are diminished to the desired level of the largest class. With $\lambda=.1$, for example, age prediction is eliminated completely (reducing effectively to the 34.7\% baseline) for all sizes of representation, and likewise for gender with representation of size $10-20$. For size $50$ we see some residual predictive power, though it is highly reduced relative to the regular recommendation system. 

For the large representation (size $50$) and large values of $\lambda$ in the range of $\lambda \geq 1$ we see an interesting phenomenon of reversal of the effect, with demographic readout sometimes way above the regular recommendation system (e.g when the embedding size = 50 and $\lambda=1$ the gender prediction achieves 86.0\%). We suspect this happens due to the relative high learning rate, which causes the system to diverge.
%This is most likely due to the larger learning rate for the adversarial component overshooting and thus moving in a direction appropriate for \textit{multi-task} rather than \textit{adversarial} training. To further investigate this we ran the same system in the true multi-task setting (i.e. a small \textbf{negative} value of $\lambda$), and verified that this leads to near perfect recovery of demographics from user representations. 

\begin{table}[b]
  \caption{Recommendation System performance (accuracy@10) with privacy-adversarial training. First column corresponds to the regular recommendation system, and the following columns to privacy-adversarial training with the prescribed value of $\lambda$. Rows correspond to the size of user and item representations.}
  \label{tab:recsys}
  \begin{tabular}{c|c|cccc}
    \toprule
    size / $\lambda$ & 0 & .01 & .1 & 1 & 10 \\
    \midrule
    10 & \textbf{3.05} & 2.76 & \textbf{2.88} & 2.43 & 2.67 \\
    20 & 3.00 & 2.68 & 2.04 & 2.20 & 2.07 \\
    50 & 2.65 & 2.38 & 2.22 & 2.22 & 2.15 \\

  \bottomrule
\end{tabular}
\end{table}

With respect to the trade-off between system performance and privacy, results indicate (Table. \ref{tab:recsys}) that smaller user representations (size $10$) are preferential for this small dataset. We see some degradation with adversarial training, but nevertheless we are able to eliminate private information almost entirely with representations of size $10$ and $\lambda=.1$ while sacrificing only a small proportion of performance (accuracy@10 of 2.88\% instead of the 3.05\% for the regular system, gender information gap of 0.37\% and age information gap of 0.41\% from the majority group). 

Together, these results show the existence of the privacy leakage in a typical recommendation system, and the ability to eliminate it with privacy-adversarial training while harming the overall system performance only marginally. 

\section{\uppercase{Conclusions}}

In this paper we discuss information leakage from user representations of Latent factor recommender systems, and show that private demographic information can be read-out even when not used in the training data. We adapt the adversarial training framework in the context of privacy in recommender systems. An adversarial component is appended to the model for each of the demographic variables we want to obfuscate, so that the learned user representations are optimized in a way that precludes predicting these variables. We show that the proposed framework has the desired privacy preserving effect, while having a minimal overall adverse effect on recommender performance, when using the correct value of the trade-off parameter $\lambda$. Our experiments show that this value should be determined for a given dataset, since values too large lead to instability of the adversarial component. But in any case, as suggested by \cite{elazar:2018}, when concerened with sensitive features, one should verify the amount of information in the representation with additional post-hoc classifier.

The adversarial method can be used to obfuscate any private variable known during training (in this paper we discuss categorical variables, but the generalization to the continuous case is trivial). While at first glance this may be seen as a shortcoming of the approach, it is interesting to note that it would be inherently infeasible to force the representation not to include \textit{any} factor implicitly associated with item choice. Clearly, in such a case there would be no information left to drive recommendations. The intended use of the method is rather to hide a small set of protected variables known during training, while using the rest of the implicit information in the usage data to drive recommendations.       

An interesting topic for further research is the amount of private information that is available in the top-k recommendations themselves. Since the sole reason private demographic information is present in the user representations is to help drive recommendations, it stands to reason that it would be possible to design a method of reverse-engineering in the form of a readout from the actual recommendations. Such a leakage, to the extent that it indeed exists, would have much further reaching practical implications for privacy and security of individuals.  

% A limitation of this work is related to the size of the current dataset, in terms of number of users and items, and the limited scope of demographic information. Future research will attempt to better quantify the privacy issue surfaced in this paper by obtaining and utilizing larger and richer data, with a complete view of private user demographic information. 

Another topic for further research is the use of privacy-adversarial training to boost the personalization and specificity of recommendations. By eliminating the demographic (or other profile related) information, suggested items are coerced out of stereotypical templates related to coarse profiling. It is our hope that user testing will confirm that this leads to deeper and more meaningful user models, and overall higher quality recommendations. 

% \vfill
\bibliographystyle{apalike}
{\small
\bibliography{lib}}

%\section*{\uppercase{Appendix}}
%\noindent If any, the appendix should appear directly after the
%references without numbering, and not on a new page. To do so please use the following command:
%\textit{$\backslash$section*\{APPENDIX\}}

\vfill
\end{document}